\begin{document}
\frontmatter          
\pagestyle{headings}  
\addtocmark{Hamiltonian Mechanics} 
\mainmatter              

\newcommand{\LCSK}{$LCS_k$}
\newcommand{\LCSKP}{$LCS_{k+}$}

\newcommand{\fortilde}[1]{\urlstyle{same}\url{#1}}

\title{Fast and simple algorithms for computing both \LCSK{} and \LCSKP{}}
\titlerunning{Fast \LCSK{} and \LCSKP{} }  
%

\author{Filip Paveti\'{c}\inst{1}, Ivan Katani\'{c}\inst{2}, Gustav Matula\inst{2}, Goran \v{Z}u\v{z}i\'{c}\inst{3} \and Mile \v{S}iki\'{c}\inst{2}}

\authorrunning{Filip Paveti\'{c} et al.} 
%
%
\institute{
Google Switzerland GmbH, Z\"urich, Switzerland\\
  \email{fpavetic@google.com}\\
\and
Faculty of Electrical Engineering and Computing, University of Zagreb, Croatia\\
  \email{\{ivan.katanic, gustav.matula, mile.sikic\}@fer.hr}\\
\and
Carnegie Mellon University, Pittsburgh, USA\\
  \email{gzuzic@cs.cmu.edu}
}

\maketitle              

\begin{abstract}
Longest Common Subsequence ($LCS$) deals with the problem of measuring similarity of two strings. While this problem has been analyzed for decades, the recent interest stems from a practical observation that considering single characters is often too simplistic. Therefore, recent works introduce the variants of $LCS$ based on shared substrings of length exactly or at least $k$ (\LCSK{} and \LCSKP{} respectively). The main drawback of the  state-of-the-art algorithms for computing \LCSK{} and \LCSKP{} is that they work well only in a limited setting: they either solve the average case well while being suboptimal in the pathological situations or they achieve a good worst-case performance, but fail to exploit the input data properties to speed up the computation. Furthermore, these algorithms are based on non-trivial data structures which is not ideal from a practitioner's point of view. We present a single algorithm to compute both \LCSK{} and \LCSKP{} which outperforms the state-of-the art algorithms in terms of runtime complexity and requires only basic data structures. In addition, we implement an algorithm to reconstruct the solution which offers significant improvement in terms of memory consumption. Our empirical validation shows that we save several orders of magnitude of memory on human genome data. The C++ implementation of our algorithms is made available at: \url{https://github.com/google/fast-simple-lcsk}.

\keywords{longest common subsequence, string similarity, efficient dynamic programming, bioinformatics, memory optimization}
\end{abstract}
%
%

\section{Introduction}

Measuring the similarity of strings is one of the fundamental problems in computer science. It is very useful in many real-world applications such as DNA sequence comparison \cite{lcsk}, differential file analysis and plagiarism detection \cite{chen2010plagiarism}. One of the most popular techniques for efficient measurement of string similarity is the Longest Common Subsequence (\emph{LCS}) \cite{lcs.hunt,Kuo:1989:IAF:74697.74702,lcsfragments}.\\

\noindent\textbf{LCS and extensions.} Recently, there has been some critique of $LCS$ being an oversimplified way to measure string similarity as it does not distinguish well between the sequences which consist mainly of consecutive characters and the ones which do not \cite{benson2016lcsk,UekiDKMNYBIS16}. To overcome this limitation, extensions of $LCS$ to more general variants have been proposed (see Example \ref{lcsk.example}). In particular, Benson et al. \cite{lcsk,benson2016lcsk} suggested \LCSK, which computes the similarity by counting the number of non-overlapping substrings of length $k$ contained in both strings. Another extension was given by Paveti\'{c} et al. \cite{PaveticZS14}: the $LCS_{k++}$ computes similarity by summing the lengths of non-overlapping substrings of length \textbf{at least} $k$ contained in both strings. These variants have been applied in bioinformatics \cite{Sovic020719}. Later it has been renamed as $LCS_{\ge k}$ by Benson et al. \cite{benson2016lcsk} and as (used in this paper) \LCSKP{} by Ueki et al. \cite{UekiDKMNYBIS16}\footnote{Different authors used different names, however the definitions are the same.}.

\begin{example}[Values of \LCSK{} and \LCSKP{} for various string pairs]
\begin{itemize}
  \item $LCS_{3}(ABCBA, ABCBA) = 1$ $(ABC)$ or $(BCB)$ or $(CBA)$
  \item $LCS_{3+}(ABCBA, ABCBA) = 5$ $(ABCBA)$
  \item $LCS_{2}(ABXXXCDE, ABYYYCDE) = 2$ $(AB, CD)$ or $(AB, DE)$
  \item $LCS_{2+}(ABXXXCDE, ABYYYCDE) = 5$ $(AB, CDE)$
  \item $LCS_{1}(AAA, AA) = LCS(AAA, AA) = 2$ $(A, A)$
  \item $LCS_{1+}(AAA, AA) = LCS(AAA, AA) = 2$ $(A, A)$
\end{itemize}
\label{lcsk.example}
\end{example}

\noindent\textbf{State of the art.} For $LCS_k$ and $LCS_{k+}$ to be useful in practice, we need to be able to compute them efficiently. State-of-the-art algorithms are often parametrized by the total number of matching $k$-length substring pairs between the input strings, denoted by $r$. The observation that $r$ is often limited in real-world data can be used to speed up the computation. Existing algorithms usually specialize for the situations when $r$ is either low or high. Deorowicz and Grabowski \cite{lcskfast} proposed several algorithms for efficient computation of $LCS_k$. Most notably, their \textit{Sparse} algorithm allows both the computation of \LCSK{} and its reconstruction in $\mathcal{O}(m + n + r\log l)$ time and $\mathcal{O}(r)$ memory complexity, where $l$ is the length of the optimal solution and $m,n$ are the lengths of the two input strings. In their approach they adapt the Hunt-Szymanski \cite{lcs.hunt} paradigm in a way that requires them to use a persistent red-black tree. Paveti\'{c} et al. \cite{PaveticZS14,github:lcskpp} proposed an algorithm based on the Fenwick tree data structure \cite{fenwick1994new} to efficiently compute $LCS_{k+}$ in $\mathcal{O}(m + n + r\log r)$. Ueki et al. \cite{UekiDKMNYBIS16} proposed an algorithm which achieves a better worst-case complexity than Paveti\'{c} and \v{Z}u\v{z}i\'{c} \cite{github:lcskpp} when $r\sim mn$, but a worse performance in the situations when $r$ is small. The complexity of their algorithm is $\mathcal{O}(mn)$ and does not depend on $r$.\\

\noindent\textbf{Improving the current state.} The algorithms to compute \LCSK{} and \LCSKP{} are either of unsatisfactory runtime complexity or they rely on using complex data structures. Implementing these complex algorithms is extremely time consuming for a practitioner who usually has to make an experiment-based decision about which similarity measure is even useful for their cause. Therefore, it is valuable to have a simple and fast algorithm to compute both \LCSK{} and \LCSKP{}. Our contributions in this paper are:
\begin{itemize}
\item[$\bullet$] We propose an algorithm to compute \LCSK{} which achieves a favorable runtime complexity when compared to the previously known algorithms. The runtime complexity of the algorithm is $\mathcal{O}(\min(r\log l, r + ml))$, where $l$ is the length of the optimal solution. (Section \ref{lcsk.algorithm})
\item[$\bullet$] We show that the same algorithm can be easily extended to compute $LCS_{k+}$. This unifies the solutions for both of these problems. (Section \ref{section:lcskpp})
\item[$\bullet$] We propose a heuristic to reduce memory needed to reconstruct the solution. Experiments on the human genome demonstrate that it reduces the memory usage by several orders of magnitude. (Section \ref{section:reconstructionmemory})
\item[$\bullet$] Our algorithms do not rely on complex data structures such as Fenwick or a persistent red-black tree.
\item[$\bullet$] To speed up future research on the topic, we made the implementation of our algorithms available at \url{https://github.com/google/fast-simple-lcsk}. As far as we know, this is the first widely accessible implementation of algorithms to compute \LCSK{} and \LCSKP{}.
\end{itemize}

\section{Preliminaries}
\label{preliminaries}

This section contains the definitions useful throughout the rest of the paper. Even though we inherit some of the definitions from other sources, we state them here for the sake of completeness. The inputs to all the algorithms are strings $A$ of length $m$ and $B$ of length $n$, both over alphabet $\Sigma$. Without loss of generality we can assume that $m \le n$. We use $X[i:j)$ to denote the substring of $X$, starting at (inclusive) index $i$ and ending at (exclusive) index $j$. Using that notation it holds that $A[0:m)=A$ and $B[0:n)=B$.

\begin{definition}[The $LCS_k$ problem \cite{benson2016lcsk}]
Given two strings $A$ and $B$ of length $m$ and $n$, respectively, and an integer $k \ge 1$, we say that $C$ is a common subsequence in \emph{exactly} $k$ length substrings of $A$ and $B$, if there exist $i_1, ... , i_t$ and $j_1, ... , j_t$ such that $A[i_s:i_s+k) = B[j_s:j_s+k) = C[p_s:p_s+k)$ for $1 \le s \le t$, and $i_s + k \le i_{s+1}, j_s + k \le j_{s+1}$ and $p_{s+1} = p_s + k$ for $1 \le s < t, p_1 = 0$ and $|C| = p_t + k$. The longest common subsequence in \emph{exactly} $k$ length substrings ($LCS_{k}$) equals to maximum possible $t$, such that the mentioned conditions are met.
\end{definition}

\begin{definition}[The $LCS_{k+}$ problem \cite{PaveticZS14,benson2016lcsk,UekiDKMNYBIS16}]
Given two strings $A$ and $B$ of length $m$ and $n$, respectively, and an integer $k \ge 1$, we say that $C$ is a common subsequence in \emph{at least} $k$ length substrings of $A$ and $B$, if there exist $i_1, ... , i_t$, $j_1, ... , j_t$ and $l_1, ... , l_t$ such that $A[i_s:i_s+l_s) = B[j_s:j_s+l_s) = C[p_s:p_s+l_s)$ and $l_s \ge k$ for $1 \le s \le t$, and $i_s + l_s \le i_{s+1}, j_s + l_s \le j_{s+1}$ and $p_{s+1} = p_s + l_s$ for $1 \le s < t, p_1 = 0$ and $|C| = p_t + l_t$. The longest common subsequence in \emph{at least} $k$ length substrings ($LCS_{k+}$) equals to maximum possible sum $\sum_{i=1}^{t}l_i$, such that the mentioned conditions are met.
\end{definition}

\begin{figure}[t]
\centering

\begin{tikzpicture}[scale=0.7]
\draw (0,0) grid (10,5);
\node at (0.5, 5.5) {C};
\node at (1.5, 5.5) {T};
\node at (2.5, 5.5) {A};
\node at (3.5, 5.5) {T};
\node at (4.5, 5.5) {A};
\node at (5.5, 5.5) {G};
\node at (6.5, 5.5) {A};
\node at (7.5, 5.5) {G};
\node at (8.5, 5.5) {T};
\node at (9.5, 5.5) {A};

\node at (-0.5, 4.5) {A};
\node at (-0.5, 3.5) {T};
\node at (-0.5, 2.5) {T};
\node at (-0.5, 1.5) {A};
\node at (-0.5, 0.5) {T};

\node at (1.5, 2.5) {a}; 
\draw (1.5,2.5) circle [radius=0.25]; 
\node at (2.25, 1.75) {a}; 
\draw (2.25,1.75) +(-0.25, -0.25) rectangle +(0.25, 0.25);

\node at (2.75, 1.25) {b};
\draw (2.75,1.25) circle [radius=0.25];
\node at (3.5, 0.5) {b};
\draw (3.5,0.5) +(-0.25, -0.25) rectangle +(0.25, 0.25);

\node at (2.5, 4.5) {c};
\draw (2.5,4.5) circle [radius=0.25];
\node at (3.5, 3.5) {c};
\draw (3.5,3.5) +(-0.25, -0.25) rectangle +(0.25, 0.25);

\node at (3.5, 2.5) {d};
\draw (3.5,2.5) circle [radius=0.25];
\node at (4.5, 1.5) {d};
\draw (4.5,1.5) +(-0.25, -0.25) rectangle +(0.25, 0.25);

\node at (8.5, 2.5) {e};
\draw (8.5,2.5) circle [radius=0.25];
\node at (9.5, 1.5) {e};
\draw (9.5,1.5) +(-0.25, -0.25) rectangle +(0.25, 0.25);
\end{tikzpicture}

\caption{The Figure shows the start and end points of the match pairs produced by the two strings. In this example strings $A=ATTAT$ and $B=CTATAGAGTA$ construct exactly five match pairs for $k=2$, denoted with $a$ to $e$. Start points of the pairs are represented by circles and their end points are represented by squares. The following holds: ``$c$ precedes $e$'', while the following does not hold: ``$a$ precedes $b$'', ``$c$ precedes $d$'' (Definition \ref{precedence}). Note that a start point of one match pair can share coordinates with the end point of another: e. g. end point of $a$ and start point of $b$.}
\label{fig:kmatchpairs}
\end{figure}

\noindent We state the recurrence relation to compute \LCSK{}($i,j$)=\LCSK{}($A[0:i),B[0,j)$) for two strings $A$ and $B$, given in \cite{lcsk}:

\begin{eqnarray}
\label{relation.lcsk}
  LCS_k(i,j) & = & \max
  \left \{
    \begin{array}{ll}
      LCS_k(i-1, j) & $ if $ i \ge 1\\
      LCS_k(i, j-1) & $ if $ j \ge 1\\
      LCS_k(i-k, j-k)+1 & $ if $ A[i-k:i)=B[j-k:j)\\
    \end{array}
  \right .
\end{eqnarray}
The choice to add $1$ or $k$ in the last line of Equation \ref{relation.lcsk} is arbitrary since it influences the final result only by a constant factor. We choose to add $+1$ to be consistent with prior definitions of \LCSK. When expanding the relation to \LCSKP, we need to adjust it as shown in \cite{PaveticZS14,UekiDKMNYBIS16}:
\begin{eqnarray}
\label{relation.lcskp}
  LCS_{k+}(i,j) & = & \max
  \left \{
    \begin{array}{ll}
      LCS_{k+}(i-1, j) & i \ge 1\\
      LCS_{k+}(i, j-1) & j \ge 1\\
      \text{$\max\limits_{\substack{k\le k' \le min(i,j)\\A[i-k':i)=B[j-k':j)}}$} LCS_{k+}(i-k',j-k')+k' & \\
    \end{array}
  \right .
\end{eqnarray}

\begin{definition}[Match pair \cite{lcsk}]
Given the strings $A$, $B$ and integer $k \ge 1$ we say that at $(i,j)$ there is a \textbf{match pair} if $A[i:i+k)=B[j:j+k)$. $(i,j)$ is also called the \textbf{start point} or the \textbf{start} of the match pair. $(i+k-1,j+k-1)$ is called the \textbf{end point} or the \textbf{end} of the match pair.
\end{definition}

\begin{definition}[Precedence of match pairs]
Let $P$=$(i_P, j_P)$ and $G$=$(i_G, j_G)$ be match pairs. Then $G$ \textbf{precedes} $P$ if $i_G+k \le i_P$ and $j_G+k \le j_P$. In other words, $G$ precedes $P$ if the end of G is on the upper left side of the start of P in the dynamic programming table (see Figure \ref{fig:kmatchpairs}).
\label{precedence}
\end{definition}

\subsection{Efficient algorithms for computing LCS}
The known efficient algorithms are based on the observation that in order to compute \emph{LCS} via a classic dynamic programming approach\footnote{This is usually done by filling out a matrix based on the relation which we get after we set $k=1$ in either of the Equation \ref{relation.lcsk} or Equation \ref{relation.lcskp}.}, it is not always necessary to fill out the entire matrix. If it happens that many entries repeat in the cases when two strings have only a few pairs of matching characters, it is possible to design a structure which stores the matrix in a compressed form. We sketch the main ideas behind these techniques since we later use them as building blocks.\\

\noindent The algorithm by Hunt and Szymanski \cite{lcs.hunt,Bergroth:2000:SLC:829519.830817} traverses only the matching character pairs of the two strings in row-major order. The main idea is to maintain an array $M$ such that $M_d$ holds the minimum $j$ such that there is some already processed row $i$ for which $LCS(i, j)=d$. For simplicity we define $M_0 = 0$ and $M_d = \infty$ if no such $j$ exists. In simpler terms, $M$ is a compressed representation of the dynamic programming table, storing only the boundaries of same-value intervals. This is obviously useful when a row contains many repeated values. Note that $M$ is non-decreasing. For every point $(i, j)$ corresponding to matching character pairs in row $i$, let us find the biggest $d$ such that $M_d < j$. Then there must exist a common subsequence of length $d$ ending with $(i', j')$ where $i' < i$ and $j' < j$. Such a subsequence can be extended by $(i, j)$, so we know that after processing all the points in the current row we will have $M_{d+1} \le j$. It will suffice to find a $d$ such that $M_d < j \le M_{d+1}$ (easily done using binary search), and set $M_{d+1}$ to $\min(M_{d+1}, j)$ (since we're extending a subsequence of length $d$ ending at column $M_d$ into a subsequence of length $d+1$ ending at column $j$). We note that the order in which the points of a row are processed matters: they should be ordered descending by column, so that the updates to $M$ with the results of a new row happen effectively at the same time (otherwise queries are influenced by previous updates from the same row, which leads to incorrect results).\\

\noindent Hunt's algorithm was modified by Kuo and Cross \cite{Kuo:1989:IAF:74697.74702} by replacing the binary search for each point in a row with a linear scan of $M$ together with all the points in a row. Specifically, as we process the points of the current row, we also maintain an index $d$ into $M$ which we increment until $M_{d+1} \ge j$ for the current point $(i, j)$. The increments of $d$ amortize over the length of $M$, so the total complexity is $\mathcal{O}(r_i + l)$, where $r_i$ is the number of points in row $i$ and $l$ is the length of the $LCS$. When $r_i \ll l$ this algorithm is  performing worse than Hunt's variant (which would have a runtime complexity of $\mathcal{O}(r_i \log r_i)$). However, it does becomes a significant improvement as $r_i$ approaches $l$.

\section{Algorithm to compute \LCSK{}}
\label{lcsk.algorithm}

In this section we show how to compute \LCSK{} efficiently. The main observation is that processing start and end points of the match pairs independently allows us to directly re-use the techniques for the efficient computation of $LCS$. Additionally, we dynamically adapt the computation between situations where the number of match pairs $r$ is low as well as high, in order to secure a good worst-case performance.  We achieve a runtime complexity of $\mathcal{O}(m + n + r + \min(r\log l, r + ml))$ and the memory complexity of $\mathcal{O}(l + m + n)$.

\subsection{Decoupling the starts and ends of the match pairs}
Another way to formulate the computation of \LCSK{} is to view it as a problem of finding the longest chain of match pairs. This formulation is an extension of the one previously used in the $LCS$ literature \cite{Goeman2002}. Given a match pair $P$, it is possible to compute $LCS_k(P)$ using the following relation:
\begin{eqnarray}
  LCS_k(P) & = &
  \left \{
    \begin{array}{ll}
      1 & \text{ if no match pair precedes P} \\
      \max_G LCS_k(G) + 1 & \text{ over all G preceding P}
    \end{array}
  \right .
\end{eqnarray}
In other words, we are looking for the longest chain of match pairs such that a pair which occurs earlier in the chain precedes the pairs which come later. The chain ending with a match pair $P$ can be constructed in two ways: (i) $P$ is added to some preceding shorter chain ending with $G$ or (ii) a new chain containing only $P$ is started.\\

\noindent Now that we have reformulated the relation for \LCSK{}, we show how to compute it efficiently. First we take a step back to the description of Hunt's algorithm for $LCS$. There we mentioned that the order in which the points within a row are processed matters since reads and updates to the helper array $M$ happen interchangeably. If we allow two traversals of the points, the first one can only read and the second one can only update $M$. Note that this does not have much effect on the result of the $LCS$ computation, but it only removes the restriction on the order in which we have to process the points within a row. The decoupling of the reads and the updates of $M$ is an idea which we use to generalize the algorithm to compute \LCSK{}. Namely, if we decouple the start and end points of all the match pairs and process them in row-major order, at every row we can: 1) do the reads of $M$ for all the start points and then 2) update $M$ with values of all the end points. This entire algorithm is summarized in Algorithm \ref{lcsk.fastalgorithm}.

\begin{algorithm}[h]
\begin{algorithmic}[1]  
\For{$0 \le i < m$}
  \ForAll{$x=(i,j) \in StartPointsForRow(i)$}
    \State $P\gets$ MatchPair(x)
    \State $LCS_{k,start}(P)\gets d$ s.t. $M_d < j \le M_{d+1}$
  \EndFor
  \ForAll{$x=(i,j) \in EndPointsForRow(i)$}
    \State $P\gets$ MatchPair(x)
    \State $LCS_{k,end}(P)\gets LCS_{k,start}(P)+1$
    \State $M_{LCS_{k,end}(P)}\gets \min(M_{LCS_{k,end}(P)}, j)$
  \EndFor
\EndFor

\State \textbf{return} $\max_P LCS_{k,end}(P)$
\end{algorithmic}
\caption{Computing $LCS_k$ by decoupling start and end points}
\label{lcsk.fastalgorithm}
\end{algorithm}

\noindent In the Algorithm \ref{lcsk.fastalgorithm} we use several quantities: 1) $LCS_{k,start}(P)$ stores the value read from $M$ at the start point of match pair $P$, 2) $LCS_{k,end}(P)$ stores the value of $LCS_k$ at the end point of $P$ and 3) $MatchPair(x)$ retrieves a match pair for which $x$ is a start or an end point. Lines 2-5 of the  algorithm can be implemented in two different ways: 1) following Hunt's paradigm and performing binary search over the $M$ array to do the reads and 2) following Kuo's paradigm and doing all the reads in one linear pass over array $M$. Since we can estimate the number of operations needed for both variants, at each row $i$ we dynamically choose between the two options. Doing this picks up the benefits of both approaches and makes our algorithm work efficiently in both sparse and dense rows.  $StartPointsForRow(i)$ and $EndPointsForRow(i)$ can be computed in $\mathcal{O}(n + m + r)$ time in multiple ways: 1) by using a suffix array based approach proposed by Deorowicz and Grabowski \cite{lcskfast} or 2) hash the $k$-mers of $B$ and create a hash table mapping to the indexes (if it happens that $\Sigma ^k$ is small enough to fit a computer word). Querying that table with $k$-mers from $A$ trivially gives the start/end points for a wanted row. For more details see Appendix \ref{generatingmatchpairs}.

\subsection{Complexity}
Generating all the match pairs takes $\mathcal{O}(m + n + r)$ time. Computing the update for row $i$ takes $\mathcal{O}(min(r_i\ log\ l, r_i + l))$ time. Summing over all the rows implies that
\begin{eqnarray}
\label{lcsk:complexity}
  \sum_{r=0}^{m-1} \min(r_i\ log\ l, r_i + l) \le \mathcal{O}(\min(r\ log\ l, r + ml))
\end{eqnarray}

\begin{theorem}
  \label{lcsk:complexity:thm}
The presented algorithm computes $LCS_k$ with the runtime complexity of $\mathcal{O}(m + n + r + \min(r\ log\ l, r+ml))$. The memory complexity is $\mathcal{O}(l + m + n)$.
\end{theorem}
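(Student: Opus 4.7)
The plan is to establish the theorem in three parts: correctness of Algorithm~\ref{lcsk.fastalgorithm}, the runtime bound, and the memory bound.

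For correctness, I would proceed by induction on the row index, maintaining the invariant that after the updates for row $i$ have been performed, $M_d$ is the minimum column $j'$ such that some match pair whose end point lies in a row $\le i$ at column $j'$ admits a $k$-chain of length exactly $d$ terminating at it (with $M_0=0$ and $M_d=\infty$ otherwise). The key structural observation that justifies decoupling is that if a match pair $G$ precedes a match pair $P$, then by Definition~\ref{precedence} we have $i_G+k\le i_P$, so the end-row $i_G+k-1$ of $G$ is strictly less than the start-row $i_P$ of $P$; therefore no match pair starting in row $i$ can be preceded by any match pair ending in row $i$. Consequently, reading $M$ at every start point in row $i$ before any end point in row $i$ writes to $M$ gives exactly the value $\max_G LCS_k(G)$ over all $G$ preceding $P$, matching the recurrence. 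A standard Hunt-style argument then shows that the two-pass update correctly maintains the invariant across the row, and $\max_P LCS_{k,end}(P)$ equals the optimum.

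For the runtime, generation of $StartPointsForRow$ and $EndPointsForRow$ takes $\mathcal{O}(m+n+r)$ as noted in the excerpt. Let $r_i$ be the number of match pairs touching row $i$ (as a start or end). Executing Hunt's binary-search variant on row $i$ costs $\mathcal{O}(r_i\log l)$, since $|M|\le l+1$; executing Kuo's amortized linear-scan variant costs $\mathcal{O}(r_i+l)$. Because we choose dynamically per row, the per-row cost is $\mathcal{O}(\min(r_i\log l,\, r_i+l))$, and applying the elementary inequality $\sum_i \min(a_i,b_i)\le \min(\sum_i a_i, \sum_i b_i)$ together with $\sum_i r_i = \Theta(r)$ yields
\begin{equation*}
\sum_{i=0}^{m-1}\min(r_i\log l,\, r_i+l)\;\le\;\min\!\bigl(r\log l,\; r+ml\bigr),
\end{equation*}
which together with match-pair generation gives the claimed $\mathcal{O}(m+n+r+\min(r\log l,\,r+ml))$.

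For memory, the array $M$ stores at most $l+1$ meaningful entries, the two input strings and the $k$-mer index fit in $\mathcal{O}(m+n)$, and within a single row the transient data of size $\mathcal{O}(r_i)=\mathcal{O}(n)$ can be discarded before processing the next row, so the peak footprint is $\mathcal{O}(l+m+n)$. The main obstacle I anticipate is the correctness step: one must carefully verify that the reordering of reads and writes inside a single row is safe for every match pair (in both the binary-search and linear-scan variants), and that the invariant on $M$ is preserved when the two variants are mixed across rows. Once that invariant is nailed down, the complexity bounds follow from the standard Hunt/Kuo analyses applied row by row.
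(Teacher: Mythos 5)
Your proposal is correct and follows essentially the same route as the paper: the per-row cost $\min(r_i\log l,\ r_i+l)$ summed via $\sum_i\min(a_i,b_i)\le\min(\sum_i a_i,\sum_i b_i)$, added to the $\mathcal{O}(m+n+r)$ match-pair generation, and the same $\mathcal{O}(l)$-for-$M$ plus $\mathcal{O}(m+n)$-for-generation memory accounting. The only difference is that you additionally spell out the correctness invariant on $M$ and the key observation that a preceding match pair must end strictly before the start row of its successor (which is why the read/write decoupling is safe); the paper leaves this implicit in the body text and its proof addresses only the complexity bookkeeping.
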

\begin{proof}
The runtime complexity directly follows from adding up the complexities for generating the match pairs with the right side of Equation \ref{lcsk:complexity}. Regarding memory consumption, we need to $\mathcal{O}(l)$ memory for array $M$, and $\mathcal{O}(m + n)$ memory to generate the start/end points of the match pairs (same for both described approaches to generate them). This does not takes into account the memory needed for the reconstruction of the solution, which requires $\mathcal{O}(r)$ memory.

\end{proof}

\section{Algorithm to compute $LCS_{k+}$}
\label{section:lcskpp}

In this section, we show how to modify the algorithm for \LCSK{} to compute \LCSKP{}. Similar to \LCSK{}, we look at the \LCSKP{} computation as finding chains of match pairs where consecutive ones precede or continue (see Definition \ref{def:continuation}) one another. This makes it possible to achieve the runtime complexity of $\mathcal{O}(m + n + r + \min(r (\log l + k), r + ml))$ and the memory complexity of $\mathcal{O}(l + m + n)$.

\begin{definition}[Continuation of match pairs]
Let $P=(i_P,j_P)$ and $G=(i_G,j_G)$ be $k$-match pairs. Then $P$ \textbf{continues} $G$ if $i_P=i_G+1$ and $j_P=j_G+1$. $P$ is only one down-right position from $G$, see Figure \ref{fig:kmatchpairs}.
\label{def:continuation}
\end{definition}

\noindent We reproduce the relation for computing $LCS_{k+}$ over match pairs from
\cite{PaveticZS14}:

\begin{eqnarray} \label{lcskp:dp}
  dp(P) & = & \max
  \left \{
    \begin{array}{ll}
      k & \\
      dp(G) + 1 & \text{if P is a continuation of G} \\
      \max_G dp(G) + k & \text{over all G preceding P}
    \end{array}
  \right .
\end{eqnarray}

\noindent Note that this relation computes the actual number of characters in the $LCS_{k+}$, whereas the corresponding relation for $LCS_k$ computes the number of blocks of length $k$. This difference requires us to reconsider the compressed representation $M$ and make important changes. Instead of the array $M$, we introduce a new array $N$, defined as follows. Lets assume that we have finished processing the start and end points in row $i-1$. Let $N_d$ denote the minimum $j$ such that $dp(P) \ge d$ for some match pair $P$ with end point $(i', j)$. Furthermore, we let $N_0 = 0$ and $N_d = \infty$ when no such $j$ exists. Note that by using '$\ge$' instead of '$=$' we make $N$ non-decreasing. Indeed, since $LCS_{k+}(i-1, N_{d+1}) \ge d + 1 \ge d$, $N_d$ cannot be greater than $N_{d+1}$.\\

\noindent Using $N$ to compute the maximum over preceding pairs in Equation \ref{lcskp:dp} turns out to be the same as for $LCS_k$. However, updating $N$ needs an adjustment. If we look at the end $(i, j)$ of match pair $P$ and suppose that we have calculated the corresponding $dp(P)$, the first temptation is to simply set $N_{dp(P)} \gets \min(N_{dp(P)}, j)$. In order to maintain the non-decreasing property of $N$, we must ensure that $N_d \le j$ for all $d \le dp(P)$. However, it happens that we don't really need to update the whole prefix. If $dp(P)$ is computed from a preceding match pair $G$ as $dp(G) + k$, upon processing the start point $(i-k+1, j-k+1)$ of $P$, we have $N_{dp(G)} < j-k+1 \le j$. Since for any $d$, $N_d$ may only decrease as we move from row to row, this will also hold when we reach the end point of $P$ (in row $i$). Therefore, $N_{dp(G)} = N_{dp(P) - k}$ will already be smaller than or equal to $j$. The same holds for all indices less than $dp(G)$, as $N$ is kept non-decreasing so we only need to set $N_{dp(P) - s} \gets \min(N_{dp(P) - s}, j)$ for $s \in {0, .., k-1}$. If we encounter $N_{dp(P) - s} = j$ we stop, as further values of $N$ are already smaller than or equal to $j$. Since the pairs are sorted by column, this bounds the total time spent iterating through $N$ for a single row by $O(l)$. In the case $dp(P) = dp(G) + 1$ where $P$ is a continuation of $G$ we have $N_{dp(G)} \le j - 1 < j$, so it is enough to set $N_{dp(P)} \gets \min(N_{dp(P)}, j)$.\\

\begin{figure}[t]
\centering
\begin{tabular}{| c | c | c | c | c | c | c | c | c | c | c | c | c | c | c | c |}
\hline
i & $N_{0}$ & $N_{1}$ & $N_{2}$ & $N_{3}$ & $N_{4}$ & $N_{5}$ & $N_{6}$ & $N_{7}$ & $N_{8}$ & $N_{9}$ & $N_{10}$ & $N_{11}$ & $N_{12}$ & $N_{13}$ & $N_{14}$ \\
\hline
42 & 0 & 5 & 5 & 5 & 5 & 33 & 43 & 43 & 43 & 43 & 44 & 49 & 49 & 49 & 49 \\
\hline
45 & 0 & 5 & 5 & 5 & 5 & 31 & 31 & 31 & 31 & 43 & 44 & 49 & 49 & 49 & 49 \\
\hline
\end{tabular}

\caption{Example used to highlight the difference in the updates between \LCSK{} and \LCSKP{}. It shows a typical situation where $k$ entries of $N$ need to be changed. The table shows the state of $N$ after finishing with rows $42$ and $45$. Note that the content of the strings does not matter, we assume a situation where $k = 4$ and the only match pair in that range is $(42, 28)$ ($N$ does not change between rows $43$ and $44$). When processing the start point of $P = (42, 28)$ in row $42$, we find that $N_4 < 28 \le N_5$, so we take $d = 4$. This means that using $P$ we can extend a sequence of length $4$ into a sequence of length $8$. So when we reach row $45$, and see the end point $(45, 31)$ of $P$, we calculate $dp(P) = d + k = 4 + 4 = 8$. Finally, we set $N_5$ through $N_8$ to $j = 31$.}

\label{fig:lcskp:ex1}
\end{figure}

\subsection{Complexity}

In the sparse case, the time complexities of querying and updating $N$ are $\mathcal{O}(r_i \log l)$ and $\mathcal{O}(k r_i)$ respectively, or $\mathcal{O}(r_i (\log l + k))$ in total. For the dense case we get $\mathcal{O}(r_i + l)$ for both, yielding $\mathcal{O}(\min(r_i (\log l + k), r_i + l))$ as the time complexity of processing a single row. Summing over all the rows, this is bounded by $\mathcal{O}(\min(r (\log l + k), r + ml))$. The runtime complexity of the whole algorithm is presented in Theorem \ref{lcskp.theorem}.

\begin{theorem}
The presented algorithm computes $LCS_{k+}$ with the runtime complexity of $\mathcal{O}(m + n + r + \min(r (\log l + k), r + ml))$. The memory complexity is $\mathcal{O}(l + m + n)$.
\label{lcskp.theorem}
\end{theorem}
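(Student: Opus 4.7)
The plan is to parallel the structure of Theorem~\ref{lcsk:complexity:thm}: split the total cost into (i) a one-time preprocessing phase that generates the match pairs and (ii) the row-by-row work of querying and updating the compressed representation $N$, then sum the per-row bounds over all $m$ rows. Preprocessing reuses the machinery cited in Section~\ref{lcsk.algorithm} (suffix-array or hashing based), giving $\mathcal{O}(m+n+r)$ time and $\mathcal{O}(m+n)$ auxiliary memory, so the only novelty is bounding (ii) for the $LCS_{k+}$ variant.

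For a fixed row $i$, I would analyze queries and updates separately. Queries against $N$ for the $r_i$ start points in row $i$ behave exactly as in the $LCS_k$ algorithm: following Hunt's paradigm yields $\mathcal{O}(r_i \log l)$ via binary search, while following Kuo's paradigm performs one amortized linear sweep for $\mathcal{O}(r_i + l)$. Since the algorithm picks whichever is cheaper at runtime, the query cost is $\mathcal{O}(\min(r_i\log l,\, r_i+l))$.

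The updates are the step that differs from $LCS_k$ and is the main obstacle. I would argue that for every end point of a match pair $P$, the number of positions of $N$ that can actually change is at most $k$. The discussion preceding the theorem already observes that if $dp(P)=dp(G)+k$ for some preceding $G$, then $N_{dp(G)} \le j$ was already established when the start of $P$ was processed and has only decreased since; hence only the indices $dp(P)-k+1,\dots,dp(P)$ need to be examined, and the continuation case touches only one index. This gives an $\mathcal{O}(k r_i)$ bound in the sparse regime. For the dense regime I need the stronger amortized bound $\mathcal{O}(r_i+l)$: since the end points of a row are processed in increasing column order and the sweep for each end point stops as soon as an entry $N_{dp(P)-s}$ is already $\le j$, the write pointer into $N$ is monotone within the row, so the total write work across the row telescopes to $\mathcal{O}(l)$. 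Combining yields $\mathcal{O}(\min(r_i(\log l+k),\, r_i+l))$ for row $i$.

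Summing over $i=0,\dots,m-1$ and using $\sum_i r_i = r$ gives $\mathcal{O}(\min(r(\log l + k),\, r+ml))$ for phase (ii); adding the preprocessing cost yields the stated runtime. For memory, $N$ has length $l+1$, the preprocessing structures use $\mathcal{O}(m+n)$, and all per-row scratch quantities ($LCS_{k+,\mathrm{start}}$, $LCS_{k+,\mathrm{end}}$ etc.) can be discarded after the row is finished, so the working memory is $\mathcal{O}(l+m+n)$, exactly as in Theorem~\ref{lcsk:complexity:thm}. The reconstruction arrays, which are orthogonal to the computation of the value, are deferred to Section~\ref{section:reconstructionmemory}.
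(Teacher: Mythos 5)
Your proposal is correct and follows essentially the same route as the paper: the paper's proof likewise adds the $\mathcal{O}(m+n+r)$ match-pair generation cost to the per-row bound $\mathcal{O}(\min(r_i(\log l + k),\, r_i + l))$ (queries as in the $LCS_k$ case, updates bounded by $\mathcal{O}(k r_i)$ in the sparse regime and amortized to $\mathcal{O}(r_i + l)$ via the column-sorted sweep in the dense regime), and inherits the memory analysis from Theorem~\ref{lcsk:complexity:thm}. Your write-up merely makes explicit the amortization argument that the paper states in the discussion preceding the theorem.
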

\begin{proof}
  The analysis of memory complexity is the same as in Theorem \ref{lcsk:complexity:thm}.
  For time complexity, we again add up the complexities of generating match pairs and calculating the length of the \LCSKP{}.
\end{proof}

\noindent Finally, we mention that one can achieve $\mathcal{O}(\log l)$ runtime complexity for querying $N$ by using a more involved data structure (see Appendix \ref{advancedlcskp}). However, we argue that the added complications are not worth it, since we are often in a setting where values of $k$ range in $\mathcal{O}(\log n)$. A good example of that are the DNA aligners operating on genomes having billions ($\log_2 10^9 \sim 30$) of nucleotides, with typical values of $k$ ranging from $10$ to $32$ \cite{SNAP}. Additionally, too large values of $k$ result in absence of match pairs, which does not make them useful \cite{PaveticZS14}. 

\section{Notes on implementation}
\label{section:reconstructionmemory}

We implemented the algorithms described in this paper and made the code available at \url{https://github.com/google/fast-simple-lcsk}. This section briefly describes some details of our implementation which haven't been addressed in the rest of the paper.\\

\noindent  The analysis of memory complexity for the reconstruction of the optimal solution in Section \ref{lcsk.algorithm} shows that maintaining the chains of match pairs causes the biggest part of the memory consumption - if we store the match pairs until the end of the computation to do the reconstruction we need $\mathcal{O}(r)$ memory. For long inputs, this makes it impossible to fit the computation into RAM of a single computer. To reduce the memory requirements we can observe the following: at any moment of the processing, we will have a set of reconstruction paths ending with a match pair contained in the array $M$ - we only need to keep the match pairs on these paths. As soon a there is no reconstruction path going from $M$ to some match pair $x$, we can delete $x$. This is implemented as follows: every match pair is reference counted and has a pointer to its predecessor in the reconstruction path. The last match pair in every reconstruction path is pointed to from array $M$. As soon as $M$ stops pointing to such match pair, its reference count drops to zero and it is deleted. This can further cause that the reference count of its predecessor dropped to zero so that one gets deallocated, etc.\footnote{In C++ this can be easily achieved by using std::shared\_ptr.} \\

\noindent In order to demonstrate savings of the memory consumption on the real world data, we performed simple experiments on different chromosomes from the human genome\footnote{\emph{Homo\_sapiens.GRCh38 (Release 88)} obtained from \emph{www.ensembl.org} \cite{doi:10.1093/nar/gkv1157}}. We computed \LCSK{} of several chromosomes with themselves and compared the number of match pairs with the maximum number of match pairs kept in memory during computation for different values of $k$, in order to get an estimate on their relation in real data. Figure \ref{fig:reconstruction:savings} shows that the savings of the memory consumption are significant. In particular, we can see that the described optimization can save \fortilde{~700-1000x} of the memory. The savings tend to increase as the number of match pairs increases.

\begin{figure}[t]
\centering
\begin{tabular}{| c | c | r | r | r |}
\hline
\ \ k\ \ &\ \ chromosome \ \ & \ \ match pairs\ \ &\ \ max in memory\ \ &\ \ compression factor\ \ \\
\hline
\hline
30 & 1 & 5 627 330 181 & 7 802 719 & 721.20\\
30 & 2 & 5 737 185 065 & 8 186 269 & 700.83\\
30 & 3 & 3 575 560 336 & 6 778 074 & 527.51\\
\hline
29 & 1 & 6 428 219 516 & 8 090 660 & 794.52\\
29 & 2 & 6 368 795 382 & 8 494 144 & 749.78\\
29 & 3 & 3 971 642 925 & 7 020 437 & 565.72\\
\hline
28 & 1 & 7 374 448 317 & 8 399 480 & 877.96\\
28 & 2 & 7 108 071 229 & 8 824 093 & 805.52\\
28 & 3 & 4 440 640 300 & 7 277 931 & 610.15\\
\hline
27 & 1 & 8 540 954 582 & 8 732 809 & 978.03\\
27 & 2 & 8 012 803 096 & 9 179 547 & 872.89\\
27 & 3 & 5 014 652 783 & 7 554 106 & 663.83\\
\hline
26 & 1 & 9 954 502 925 & 9 092 744 & 1094.77\\
26 & 2 & 9 100 424 727 & 9 537 800 & 954.14\\
26 & 3 & 5 708 852 882 & 7 849 184 & 727.31\\
\hline
\end{tabular}

\caption{The comparison of the number of all the match pairs and the number of max match pairs kept in memory at any moment of the computation for the first few chromosomes from the human genome and varying values of $k$. The ratios of these numbers directly translates to the savings in the memory required for the reconstruction. We note that in our case of computing \LCSK{} of a chromosome with itself, the number of match pairs is going to be a sum of squares.}

\label{fig:reconstruction:savings}
\end{figure}

\section{Conclusion and future work}

We have demonstrated a single and simple algorithm for computing both \LCSK{} and \LCSKP{} of two strings. The algorithm beats the runtime complexity of the existing approaches and does not require complex data structures. Also, we have demonstrated a heuristic for reducing the memory needed to reconstruct the solution. Experiments on real data demonstrated savings of \fortilde{~700-1000x}. Furthermore, we made our implementation widely accessible.
\\
\\
\noindent As a direction for future research we would like to pose few questions:
\begin{itemize}
\item[$\bullet$] Is it possible to create a link between the memory optimization we described with what is known as dominant points in $LCS$-related literature (see Appendix \ref{dominantpoints})?
\item[$\bullet$] It is not clear that all the match pairs are useful in the search for the optimal sequences. Can we speed up the algorithms by developing rules for discarding some of them, before even starting the computation?
\end{itemize}

\subsubsection{Acknowledgments.}
The authors would like to thank Maria Brbi\'{c} and Mario Lu\v{c}i\'{c} for the valuable comments on the manuscript.

\bibliographystyle{splncs03}             
\bibliography{fast-lcsk}            

\appendix

\newpage

\section{Appendix}
\subsection{Generating the match pairs}
\label{generatingmatchpairs}

Algorithm \ref{lcsk.fastalgorithm} assumes that it has a list of start and end points already available. Here we address how to obtain it. We offer two approaches which have the same runtime complexity, but different tradeoffs between the simplicity and the assumptions on the input data.\\

\noindent Deorowicz and Grabowski \cite{lcskfast} described an algorithm for enumerating all the match pairs by using a suffix array built over the string $B\#A$\footnote{$B\#A$ = string B, concatenated with character '\#', concatenated with string A}. The $LCP$ (Longest Common Prefix) table is used to group all the suffixes of that string sharing a prefix of at least $k$. With careful bookkeeping it is possible to enumerate all the columns $j$ which correspond to match pairs starting in row $i$. For more details please consult the referenced paper.\\

\noindent We note that in practice we are often in a setting where either the alphabet size $\Sigma$ is small or useful values of $k$ are small (e. g. DNA has $\Sigma=4$ and the popular tools for aligning the DNA often set $k$ to a range 10-32 \cite{SNAP}). If it happens that $\Sigma ^k$ is small enough to fit a 64-bit integer, we can easily and cheaply obtain perfect hashing of the $k$-mers by treating them as $k$-digit number in base $\Sigma$. Having that, we can build a hash table $\mathcal{H}$ mapping from the hashes of all the $k$-mers of $B$ to the indices of their start position. Enumerating all the match pairs starting in a row $i$ then comes down to hashing $A[i:i+k)$ and looking up all the indices from $\mathcal{H}$. This gives us a simpler algorithm, but still relevant in practice.\\

\noindent Both of the described approaches show how to generate the match pairs in $\mathcal{O}(n + m + r)$ time. We note that Algorithm \ref{lcsk.fastalgorithm} requires the start and end match points at every row. The start points for row $i$ are obtained directly by generating the match pairs at row $i$. The end points for row $i$ are obtained by generating the match pairs at row $i-k+1$.

\subsection {An alternative to $\mathcal{O}(k)$ updates for \LCSKP{}}
\label{advancedlcskp}

The operations we need are querying a single element of the array,
and setting all elements in a given prefix $[0..i]$ to the minimum
of a given value $v$ and their present value.\\

\noindent One approach is to use a complete binary tree in which the leaves correspond to array elements in order. An internal node then represents an interval of the array. To simplify things, we think of leaves as intervals consisting of a single element. Every node of the tree has an associated value, initially set to $\infty$. We denote the leaf representing element $N_i$ by $L(i)$. The algorithm is as follows:

\begin{itemize}
\item[$\bullet$] The query for $N_i$ is done by finding the minimum of the values
along the path from root to the appropriate leaf $L(i)$.
\item[$\bullet$] The update for prefix $[0, i]$ with value $v$ is done by setting
the values left siblings (if they exist) of nodes along the path from root to $L(i + 1)$
to the minimum of $v$ and their present value.
\end{itemize}

\noindent Looking at the query and update together, we see that querying for $i$ we will, for every previously applied update on $[0 .. j]$ such that $i \le j$, encounter exactly one node that was affected by it, and thus correctly calculate the current value of $N_i$. Since the depth of the tree is $\mathcal{O}(\log l)$, that is also the time complexity of our operations.\\

\noindent In total this gives the runtime complexity of $\mathcal{O}(m + n + r + \min(r \log l, r + ml))$.

\subsection{Dominant points}
\label{dominantpoints}

\noindent The example strings shown in Figure \ref{fig:reconstruction:memory} (left) generate match pairs at almost all the indexes. In this case $r=\Omega(mn)$, which makes the memory required to reconstruct the solution extremely high. Still, looking at the table, we can observe the following: in order to build any of the table entries with value 4, we only need to keep the topmost-leftmost entry with value 2, as opposed to keeping all of them. In fact, the points that we need to keep in order to reconstruct the solution are known in the $LCS$-related literature as \emph{dominant points}.

\begin{definition}[Dominant points \cite{Apostolico1987,Apostolico:1992:FLC:135853.135858}]
Lets look at a table $T(i,j)=$\LCSK{}$(i,j)$ or $T(i,j)=$\LCSKP{}$(i,j)$. A point $(i, j)$ is then called $q$-dominant if $T(i,j)=q$ and for any other $(i', j')$ such that $T(i',j')=q$ it holds that either ($i'>i$ and $j'\le j$) or ($i'\le i$ and $j'>j$) is true. The dominant points are then the union over $q$-dominant points over all different $q$.
\end{definition}

\begin{figure}[t]
\centering
\begin{tikzpicture}[scale=0.4]
\input{reconstruction_memory_lcsk_a}
\end{tikzpicture}
~
\begin{tikzpicture}[scale=0.4]
\input{reconstruction_memory_lcsk_b}
\end{tikzpicture}

\caption{The full $LCS_k$ table for two different string pairs, with $k = 2$. The encircled fields are called dominant points. For the purposes of reconstructing the solution, it is sufficient to only keep the match pairs ending at these locations.}
\label{fig:reconstruction:memory}
\end{figure}

\end{document}